\begin{document}

\begin{frontmatter}
  \title{Amortized Analysis via Coalgebra}

  \author{Harrison Grodin\thanksref{hgrodin}}
  \author{Robert Harper\thanksref{rwh}}

  \address{Computer Science Department\\ Carnegie Mellon University\\ Pittsburgh, PA, USA}

  \thanks[hgrodin]{Email: \href{mailto:hgrodin@cs.cmu.edu}{\texttt{\normalshape hgrodin@cs.cmu.edu}}}
  \thanks[rwh]{Email: \href{mailto:rwh@cs.cmu.edu}{\texttt{\normalshape rwh@cs.cmu.edu}}}

  \begin{abstract}
    Amortized analysis is a cost analysis technique for data structures in which cost is studied in aggregate: rather than considering the maximum cost of a single operation, one bounds the total cost encountered throughout a session.
    Traditionally, amortized analysis has been phrased inductively, quantifying over finite sequences of operations.
    Connecting to prior work on coalgebraic semantics for data structures, we develop the alternative perspective that amortized analysis is naturally viewed coalgebraically in a category of cost algebras, where a morphism of coalgebras serves as a first-class generalization of potential function suitable for integrating cost and behavior.
    Using this simple definition, we consider amortization of other sample effects, non-commutative printing and randomization.
    To support imprecise amortized upper bounds, we adapt our discussion to the bicategorical setting, where a potential function is a colax morphism of coalgebras.
    We support algebraic and coalgebraic operations simultaneously by using coalgebras for an endoprofunctor instead of an endofunctor, combining potential using a monoidal structure on the underlying category.
    Finally, we compose amortization arguments in the indexed category of coalgebras to implement one amortized data structure in terms of others.
  \end{abstract}

  \begin{keyword}
    amortized analysis, cost analysis, call-by-push-value, data structures, abstract data types, writer monad, coalgebra, simulation, monoidal adjunctions, profunctors, indexed categories, bicategories, lax morphisms, colax morphisms
  \end{keyword}
\end{frontmatter}

\section{Introduction}\label{sec:introduction}

In computer science, it is common to prove the cost of data structure operations, guaranteeing some exact or upper bound on the amount of abstract cost incurred.
In simple cases, a tight bound can be proved about each operation in isolation.
However, an upper bound can sometimes be too loose to be insightful, because any sequential use of the data structure can only reach the worst case infrequently.
For example, an operation that uses $\cost{8}$ of cost every eight invocations and no cost otherwise can be upper bounded by $\cost{8}$, but this gives the grossly misleading perspective that a sequence of eight operations could cost up to $\cost{64}$, even though the $\cost{8}$ will only be charged once in the sequence.
Instead, the cost of an operation should be considered in conjunction with the cost of the operations that came before or may come afterwards.

To address this problem, Tarjan~\cite{tarjan>1985} developed \emph{amortized analysis}, a technique for bounding the total cost of a sequence of operations.
Rather than claiming that the cost of the aforementioned operation is upper bounded by $\cost{8}$, one can pretend that each invocation costs only $\cost{1}$, averaging out the $\cost{8}$ over the eight invocations.
While this cost bound is not precisely true, it looks approximately true from the viewpoint of a client: a sequence of eight operations costs $\cost{8}$, exactly as the $\cost{1}$-per-operation abstraction suggests it should.
We can visualize this reasoning using the following commutative diagram:
\[\begin{tikzcd}
  \bullet & \bullet & \bullet & \bullet & \bullet & \bullet & \bullet & \bullet & \bullet \\
  \bullet & \bullet & \bullet & \bullet & \bullet & \bullet & \bullet & \bullet & \bullet
  \arrow["\cost{0}", maps to, from=1-1, to=1-2]
  \arrow["\cost{0}", maps to, from=1-2, to=1-3]
  \arrow["\cost{0}", maps to, from=1-3, to=1-4]
  \arrow["\cost{0}", maps to, from=1-4, to=1-5]
  \arrow["\cost{0}", maps to, from=1-5, to=1-6]
  \arrow["\cost{0}", maps to, from=1-6, to=1-7]
  \arrow["\cost{0}", maps to, from=1-7, to=1-8]
  \arrow["\cost{8}", maps to, from=1-8, to=1-9]
  \arrow["\cost{1}", maps to, from=2-1, to=2-2]
  \arrow["\cost{1}", maps to, from=2-2, to=2-3]
  \arrow["\cost{1}", maps to, from=2-3, to=2-4]
  \arrow["\cost{1}", maps to, from=2-4, to=2-5]
  \arrow["\cost{1}", maps to, from=2-5, to=2-6]
  \arrow["\cost{1}", maps to, from=2-6, to=2-7]
  \arrow["\cost{1}", maps to, from=2-7, to=2-8]
  \arrow["\cost{1}", maps to, from=2-8, to=2-9]
  \arrow["\cost{0}", maps to, color={rgb,255:red,127;green,127;blue,127}, from=1-1, to=2-1]
  \arrow["\cost{0}", maps to, color={rgb,255:red,127;green,127;blue,127}, from=1-9, to=2-9]
\end{tikzcd}\]
Each mapping $\bullet \mapsto \bullet$ represents the cost of an operation.
The arrows along the top represent the true cost, with the first seven operations taking no cost and the eighth operation taking $\cost{8}$ of cost.
The arrows along the bottom represent the imagined cost, with all operations taking $\cost{1}$ cost.
The lightened vertical arrows, annotated with $\cost{0}$, connect the two perspectives, allowing us to state that both routes are equivalent.
To compose this commutative cost diagram out of smaller squares for each individual operation, vertical arrows cannot always be annotated with $\cost{0}$, as this would not satisfy the arithmetic constraints.
Instead:
\[\begin{tikzcd}
  \bullet & \bullet & \bullet & \bullet & \bullet & \bullet & \bullet & \bullet & \bullet \\
  \bullet & \bullet & \bullet & \bullet & \bullet & \bullet & \bullet & \bullet & \bullet
  \arrow["\cost{0}", maps to, from=1-1, to=1-2]
  \arrow["\cost{0}", maps to, from=1-2, to=1-3]
  \arrow["\cost{0}", maps to, from=1-3, to=1-4]
  \arrow["\cost{0}", maps to, from=1-4, to=1-5]
  \arrow["\cost{0}", maps to, from=1-5, to=1-6]
  \arrow["\cost{0}", maps to, from=1-6, to=1-7]
  \arrow["\cost{0}", maps to, from=1-7, to=1-8]
  \arrow["\cost{8}", maps to, from=1-8, to=1-9]
  \arrow["\cost{1}", maps to, from=2-1, to=2-2]
  \arrow["\cost{1}", maps to, from=2-2, to=2-3]
  \arrow["\cost{1}", maps to, from=2-3, to=2-4]
  \arrow["\cost{1}", maps to, from=2-4, to=2-5]
  \arrow["\cost{1}", maps to, from=2-5, to=2-6]
  \arrow["\cost{1}", maps to, from=2-6, to=2-7]
  \arrow["\cost{1}", maps to, from=2-7, to=2-8]
  \arrow["\cost{1}", maps to, from=2-8, to=2-9]
  \arrow["\cost{0}", maps to, color={rgb,255:red,127;green,127;blue,127}, from=1-1, to=2-1]
  \arrow["\cost{1}", maps to, color={rgb,255:red,127;green,127;blue,127}, from=1-2, to=2-2]
  \arrow["\cost{2}", maps to, color={rgb,255:red,127;green,127;blue,127}, from=1-3, to=2-3]
  \arrow["\cost{3}", maps to, color={rgb,255:red,127;green,127;blue,127}, from=1-4, to=2-4]
  \arrow["\cost{4}", maps to, color={rgb,255:red,127;green,127;blue,127}, from=1-5, to=2-5]
  \arrow["\cost{5}", maps to, color={rgb,255:red,127;green,127;blue,127}, from=1-6, to=2-6]
  \arrow["\cost{6}", maps to, color={rgb,255:red,127;green,127;blue,127}, from=1-7, to=2-7]
  \arrow["\cost{7}", maps to, color={rgb,255:red,127;green,127;blue,127}, from=1-8, to=2-8]
  \arrow["\cost{0}", maps to, color={rgb,255:red,127;green,127;blue,127}, from=1-9, to=2-9]
\end{tikzcd}\]
Intuitively, the vertical arrows keep track of the difference between the realistic and imagined cost.
At the beginning and end of this trace, both perspectives align and the difference is $\cost{0}$, but at intermediate states, the imagined bottom perspective has incurred more cost than the realistic top perspective.
This difference, called the \emph{potential}, is essential in amortized analysis.

The technique of amortized analysis has been widely applied to data structures since, giving more practical bounds on ephemeral data structures.
Since its inception, the amortized study of data structures has been phrased algebraically, reasoning about the total cost of a finite sequence of operations:
\begin{quote}
  In many uses of data structures, a \emph{sequence of operations}, rather than just a single operation, is performed, and we are \emph{interested in the total time of the sequence}, rather than in the times of the individual operations.~\cite{tarjan>1985}
\end{quote}
This algebraic emphasis is in contrast to the common coalgebraic semantics taken when giving a semantics to sequential-use data structures, sometimes referred to as ``objects''~\cite{jacobs>1996-objects-coalgebraically}.
In this work, we take the perspective that amortized analysis is fundamentally coalgebraic, showing that the techniques used in amortized analysis are specialized instances of more general coalgebraic machinery, which elegantly connects to the theory of synthetic cost and behavior verification.

\subsection{Amortized Analysis}\label{sec:telescoping}

In the original development of amortized analysis, Tarjan and Sleator~\cite[\S 2]{tarjan>1985} describe the \emph{physicist's method}, a general technique for tracking amortized cost in a sequence of operations.
Let $\implObj$ be the set of states of a data structure.
In this method, one defines a function $\Phi : \implObj \to \Int$ that assigns to each state a \emph{potential}, representing the difference (assumed to be nonnegative) between the realistic and imagined cost models.
Then, if $\behof{\implCoalg} : \implObj \to \implObj$ implements an operation, the \emph{amortized cost} of $\behof{\implCoalg}$ on a state $d$ is defined as
\[ \costof{\specCoalg} = \costof{\implCoalg}(d) + \Phi(\behof{\implCoalg}(d)) - \Phi(d), \]
where $\costof{\implCoalg}(d)$ is the true cost of the operation on state $d$.
It is common to iterate this amortization equation to reason about the total cost of a finite-length sequence of operations.
Letting $d^{(i)} = \behof{\implCoalg}^{(i)}(d)$ and $\costof{\specCoalg}^{(i)} = \costof{\implCoalg}(d^{(i)}) + \Phi(d^{(i+1)}) - \Phi(d^{(i)})$, we have the following by telescoping sums:
\begin{align*}
  \sum_{i = 0}^{n - 1} \costof{\specCoalg}^{(i)}
    = (\Phi(d^{(n)}) - \Phi(d^{(0)})) + \sum_{i = 0}^{n - 1} \costof{\implCoalg}(d^{(i)})
\end{align*}
Then, if $\Phi(d^{(0)}) \le \Phi(d^{(n)})$, the true total cost $\sum_i \costof{\implCoalg}(i)$ is bounded by the amortized cost $\sum_i \costof{\specCoalg}^{(i)}$.
Given a suitable choice of $\Phi$, each amortized cost $\costof{\specCoalg}^{(i)}$ can often be bounded by a simple term, such as a constant.
For example, if $\costof{\specCoalg}^{(i)}$ is always a constant $k$, then the amortized cost of the sequence of operations is $kn$.

\subsection{Coalgebraic Semantics of Data Structures}\label{sec:intro-coalgebra}

Coalgebras have been used to give a semantics for data structures implementing sequential-use abstract data types, in the style of object-oriented programming~\cite{jacobs>1996-objects-coalgebraically}.
A \emph{signature} (or interface) is represented by an endofunctor $\sigFun : \catC \to \catC$, where $\sigFun \X$ represents the operations provided by an implementation when $\X$ is the implementation type.
For example, when $\catC = \Set$, the signature
\[ \sigFun \X = (E \Rightarrow \X) \times (1 + (E \times \X)) \]
describes data structures $\X$ that export two methods: one of type $E \Rightarrow \X$ and one of type $1 + (E \times \X)$.
For example, this signature could be used to represent stacks or queues, where the methods are either push and pop (for stacks) or enqueue and dequeue (for queues).
A \emph{$\sigFun$-coalgebra} is a pair $(\implObj, \implCoalg)$ of a \emph{carrier} object $\implObj : \catC$ and a \emph{transition morphism} $\implCoalg : \implObj \to \sigFun\implObj$.
Such a coalgebra should be understood as an implementation of the signature $\sigFun$, consisting of a state type $\implObj$ and an implementation of the methods via $\implCoalg$.
When $\sigFun$ is a product, as above, $\implCoalg$ can be specified via a collection of maps to each component, using the universal property of products.
For example, here a coalgebra consists of two maps:
\begin{mathpar}
  {\implCoalg_1 : \implObj \to E \Rightarrow \implObj}

  {\implCoalg_2 : \implObj \to 1 + (E \times \implObj)}
\end{mathpar}
Given a state of type $\implObj$, the maps offer each method for use.
For example, interpreting the above signature for stacks, the methods will implement push and pop, respectively.

\subsection{Abstract Cost Analysis via the Writer Monad}

This work builds on the proposal of Grodin~and~Harper~\cite{grodin-harper>2023} to study amortized analysis coalgebraically in Calf, an effectful dependent type theory based on call-by-push-value~\cite{levy>2003} that supports the verification of both correctness conditions and cost bounds~\cite{niu-sterling-grodin-harper>2022,grodin-niu-sterling-harper>2024}.
We recall the following syntax for programming with $\F{}$-types, writing $\A, \B, \C$ for value types and $\X, \Y, \Z$ for computation types:
\begin{mathpar}
  \infer
    {V : \A}
    {\ret{V} : \F{\A}}

  \infer
    {M : \F{\A} \\ x : \A \vdash M' : \X}
    {\bindex{M}{x}{M'} : \X}
\end{mathpar}
We also use coproducts and monoidal products of computation types, whose syntaxes are presented in the Enriched Effect Calculus~\cite{egger-mogelberg-simpson>2009,egger-mogelberg-simpson>2014} and Linear/Non-Linear type theory~\cite{benton>1995}, respectively.
As in Calf, we include an effect primitive for instrumenting a computation $M$ with $c$ units of abstract cost, here notated $\charge{c}{M}$, where $c : \Cost$ and $(\Cost, +, 0)$ is a monoid representing cost.
For clarity, we notate that a value $c : \Cost$ is a cost as $\cost{c}$.
The cost effect increments an ongoing counter by $c$, respecting the monoid structure:
\begin{mathpar}
  \charge{0}{M} = M

  \charge{c_1}{\charge{c_2}{M}} = \charge*{c_1 + c_2}{M}
\end{mathpar}
Crucially for amortization, effects in call-by-push-value commute with other computations, which can be understood by thinking of computation types as ``lazy'' types of call-by-name:
\begin{align*}
  \bindex{\charge{c}{M}}{x}{M'} &= \charge{c}{\bindex{M}{x}{M'}} & (M : \F{\A}) \\
  \charge{c}{M}~\proj{proj}_i &= \charge{c}{M~\proj{proj}_i} & (M : \X \times \Y) \\
  \charge{c}{M}~V &= \charge{c}{M~V} & (M : \A \pto \X) \\
  \charge{c}{V, M} &= (V, \charge{c}{M}) & ((V, M) : \A \rtimes \X)
\end{align*}
One key observation of Calf is that cost analysis is inseparable from correctness verification: in general, the cost of a program may depend on its behavior.
This attitude is further validated in the present work, since the amortized cost of a data structure operation may depend on aspects of its state.

Semantically, we will work in the Eilenberg--Moore category of a monad $T$ on $\Set$, written $\Alg{T}$.
This category is complete and cocomplete, and it is powered and copowered over $\Set$.
We recall that for any monad $T : \Set \to \Set$, there is an adjunction $\F{} \dashv \U{} : \Alg{T} \to \Set$ whose induced monad is $T$.
Often, we will let $T$ be a writer monad $\Writer{(-)}$.
An algebra for the writer monad is a set $\U{\X}$ equipped with a coherent method for storing abstract cost within the set, $\mathsf{charge}_{\X} : \Cost \times \U{\X} \to \U{\X}$.
In the present work, this aspect of the category of algebras will be essential: since every object comes equipped with a method for absorbing cost, any cost incurred will by construction be amortized forward.
Inspired by call-by-push-value~\cite{levy>2003}, we will abbreviate morphisms $\F{\A} \to \X$ as simply $\A \pto \X$, which we often implicitly understand as $\A \to \U{\X}$ using the adjunction.
Such maps implicitly propagate cost from the input to the output, also essential for amortization since previously-incurred cost should never be lost.
Also, when using the writer monad, we will notate the cost and behavior components of a map $\implCoalg : \A \pto \F{\B}$ as $\costof{\implCoalg} : \A \to \Cost$ and $\behof{\implCoalg} : \A \to \B$, respectively.

\paragraph{Synopsis}
In Section~\ref{sec:coalgebra}, we introduce coalgebra morphisms and show that in the presence of cost, they generalize the potential functions of amortized analysis.
In Section~\ref{sec:lax}, we support inexact amortized upper bounds by generalizing from categories to bicategories and from coalgebra morphisms to colax coalgebra morphisms.
In Section~\ref{sec:profunctor}, we incorporate mixed-variance operations using endoprofunctors, exploiting the symmetric monoidal structure on the category of writer monad algebras present given a commutative cost model to sum the potential of all relevant instances.
In Section~\ref{sec:composition}, we observe that potential functions can be composed, and we view coalgebras as an indexed category in order to implement one amortized data structure in terms of another with a differing signature.
\section{Coalgebra Morphisms as Generalized Potential Functions}\label{sec:coalgebra}

For the purpose of cost analysis, we will typically consider data structures implemented as coalgebras over a signature endofunctor $\sigFun$ on $\Alg{\Writer{(-)}}$, the category of writer monad algebras for a monoid $(\Cost, +, 0)$.
The carrier of the coalgebra in our examples will typically be of the form $\F{\underline{\implObj}}$, the free cost algebra on a set $\underline{\implObj}$.
We start by giving two examples of data structure implementations, each with only a single method.

\begin{example}\label{ex:allocation}
  For this example, we use $(\Cost, +, 0) \isdef (\Int, +, 0)$, the additive monoid of integers.
  We use the signature functor $\sigFun = \Id : \Alg{\Writer{(-)}} \to \Alg{\Writer{(-)}}$ to represent a single method: a transition morphism $\implCoalg : \implObj \to \Id(\implObj)$ is the implementation of the method for chosen state type $\implObj$.
  An $\Id$-coalgebra consists of an object $\implObj$ of $\Alg{\Writer{(-)}}$ and a transition morphism $\implCoalg : \implObj \to \implObj$.
  Treating the method as allocation of one unit of space, we may define two $\Id$-coalgebras: one simple, unrealistic model that allocates one cell per call, and one realistic model that allocates eight spaces every eight calls.
  \begin{enumerate}
    \item
      The simple, unrealistic \emph{specification} coalgebra has carrier $\specObj = \F{1}$ and transition morphism as follows:
      \begin{center}
        \iblock{
          \mrow{\specCoalg : 1 \pto \F{1}}
          \mrow{\specCoalg~\triv = \charge{1}{\ret{\triv}}}
        }
      \end{center}
      In this coalgebra, no state is maintained, and one allocation is performed per transition.
    \item
      The realistic \emph{data structure} coalgebra has carrier $\implObj = \F*{\Fin{8}}$, tracking how many already-allocated cells are free.
      Its transition morphism is given by:
      \begin{center}
        \iblock{
          \mrow{\implCoalg : \Fin{8} \pto \F*{\Fin{8}}}
          \mrow{\implCoalg~\zero = \charge{8}{\ret{7}}}
          \mrow{\implCoalg~(\suc{d}) = \ret{d}}
        }
      \end{center}
      If no space remains, $\cost{8}$ of cost is charged, allocating eight cells; seven cells remain after the transition.
      If some space remains, the amount of space remaining is decreased without performing any allocations.
  \end{enumerate}
  The coalgebra $(\implObj, \implCoalg)$ is an amortizing implementation of the specification coalgebra $(\specObj, \specCoalg)$.
  The coalgebras are not always exactly synchronized, as (starting from state $7 : \Fin{8}$) the implementation incurs a large cost only after the specification ``saves up'' enough cost to match it.
  However, from the perspective of a client, the more complex implementation can be approximated via the simple specification.
\end{example}

To prove the relationship between the specification $(\specObj, \specCoalg)$ and the amortized implementation $(\implObj, \implCoalg)$, we write a morphism of coalgebras, generalizing the potential functions of amortized analysis.

\subsection{Coalgebra Morphisms}

To prove that $(\implObj, \implCoalg)$ is an amortizing implementation of $(\specObj, \specCoalg)$, one classically gives a potential function $\Phi : \implObj \to \Cost$ satisfying the amortization condition discussed in Section~\ref{sec:introduction}.
We now define what it means to be a morphism of coalgebras and show that the amortization condition falls out as a special case when working in the category of writer monad algebras.

\begin{definition}
  Let $(\implObj, \implCoalg)$ and $(\specObj, \specCoalg)$ be $\sigFun$-coalgebras.
  A \emph{morphism of $\sigFun$-coalgebras} from $(\implObj, \implCoalg)$ to $(\specObj, \specCoalg)$ consists of a morphism $\Phi : \implObj \to \specObj$ that preserves the $\sigFun$-coalgebra structure:
  \[\begin{tikzcd}
    \implObj & \sigFun \implObj \\
    \specObj & \sigFun \specObj
    \arrow["\implCoalg", from=1-1, to=1-2]
    \arrow["\specCoalg", from=2-1, to=2-2]
    \arrow["\Phi", from=1-1, to=2-1]
    \arrow["\sigFun\Phi", from=1-2, to=2-2]
  \end{tikzcd}\]
  In other words, $\Phi$ preserves observational equivalence: using only the $\sigFun$-coalgebra structure, one can make identical observations regardless of when $\Phi$ is used to transition from $\implObj$ to $\specObj$.
  In this sense, $(\implObj, \implCoalg)$ is simulated by $(\specObj, \specCoalg)$, with the simulation mediated by $\Phi$: up to the translation $\Phi$, the coalgebra $(\specObj, \specCoalg)$ behaves just like $(\implObj, \implCoalg)$.
  We will refer to the commutativity of this diagram as the \emph{generalized amortization condition} for reasons that will be developed shortly.
\end{definition}

We write $\Coalg{\sigFun}$ for the category of $\sigFun$-coalgebras and coalgebra morphisms.
Now, we show that the requirement that $\Phi$ preserve the coalgebra structure is exactly the required condition on a potential function in amortized analysis.

\begin{example}\label{ex:allocation-potential}
  Recall the $\Id$-coalgebras from Example~\ref{ex:allocation}.
  To give a morphism from $(\implObj, \implCoalg)$ to $(\specObj, \specCoalg)$, we must provide a function $\Phi : \F*{\Fin{8}} \to \F{1}$, equivalently written $\Phi : \Fin{8} \pto \F{1}$, that preserves the coalgebra structure, as specified above.
  Equationally, the structure preservation condition says that \[ \Phi \semi \specCoalg = \implCoalg \semi{} \sigFun\Phi \] as morphisms $\Fin{8} \pto \F{1}$.
  Using the Eilenberg--Moore adjunction, a map $\Fin{8} \pto \F{1}$ is equivalent to function $\Fin{8} \to \U*{\F{1}}$. Since $\U*{\F{1}} = \Writer{1} \cong \Cost$, it is equivalent to give a function $\Phi : \Fin{8} \to \Cost$ such that for all $d : \Fin{8}$, it is the case that \[ \Phi(d) + \costof{\specCoalg}(\triv) = \costof{\implCoalg}(d) + \Phi(\behof{\implCoalg}(d)). \]
  Subtracting $\Phi(d)$ from both sides, using the commutativity of addition in $\Int$, and dropping the trivial argument $\triv : 1$, this condition is exactly the amortization condition
  \[ \costof{\specCoalg} = \costof{\implCoalg}(d) + \Phi(\behof{\implCoalg}(d)) - \Phi(d) \]
  discussed in Section~\ref{sec:introduction}.
  In this case, such a function can be defined as $\Phi(d) = \cost*{7 - d}$.
  This morphism of coalgebras precisely witnesses the fact that $(\implObj, \implCoalg)$ is an amortizing implementation of $(\specObj, \specCoalg)$.
\end{example}

Here, $(\specObj, \specCoalg)$ serves as a client-facing amortized cost specification for $(\implObj, \implCoalg)$.
While no meaningful information about the state of the computation is provided, the specification conveys an amortized cost that is accurate up to the potential function $\Phi$.
By including more information in the carrier of the specification coalgebra, we may support amortized costs that vary over time.

\begin{example}
  Often, the cost of operations varies over time, whereas the previous example considers only constant specification costs.
  In simple traditional amortized analyses, one uses functions $\costof{\implCoalg} : \Nat \to \Cost$ and $\costof{\specCoalg} : \Nat \to \Cost$ to assign distinct costs to each operation in a sequence, asking that
  \[ \costof{\specCoalg}(i) = \costof{\implCoalg}(i) + \Phi(i + 1) - \Phi(i) \]
  where $\Phi : \Nat \to \Cost$.
  Letting $\specObj = \implObj = \F{\Nat}$ and $\behof{\specCoalg}(i) = \behof{\implCoalg}(i) = i + 1$, we recover the equivalent coalgebra morphism condition,
  \[ \Phi(i) + \costof{\specCoalg}(i) = \costof{\implCoalg}(i) + \Phi(i + 1). \]
  Here, both transition morphisms maintain the index of the current operation, making the assumption that some desired sequence of operations has been specified \textit{a priori}.
  More generally, though, our coalgebraic perspective allows for arbitrary carriers, viewing the cost model of a data structure alongside its implementation rather than attempting to extract a cost-only function as a separable quantity.
\end{example}

Traditional accounts of amortized analysis make implicit use of commutativity and additive inverses in $\Cost$.
However, the generalized amortization condition is sensible regardless of the cost model.
Reasoning principles pertaining to amortization can be expressed at this new level of generality.
For example, recall from Section~\ref{sec:telescoping} that a telescoping sum can be used to bound the cost of an $n$-length sequence of operations.
Coalgebraically, this condition is simply the composition of $n$ coalgebra morphism squares:
\[\begin{tikzcd}
  \implObj & \sigFun\implObj & {\sigFun^{(2)}\implObj} & \cdots & {\sigFun^{(n)}\implObj} \\
  \specObj & \sigFun\specObj & {\sigFun^{(2)}\specObj} & \cdots & {\sigFun^{(n)}\specObj}
  \arrow["\Phi", from=1-1, to=2-1]
  \arrow["\implCoalg", from=1-1, to=1-2]
  \arrow["\specCoalg", from=2-1, to=2-2]
  \arrow["\sigFun\Phi", from=1-2, to=2-2]
  \arrow["{\sigFun^{(n-1)}\specCoalg}", from=2-4, to=2-5]
  \arrow["{\sigFun^{(n-1)}\implCoalg}", from=1-4, to=1-5]
  \arrow["{\sigFun^{(n)}\Phi}", from=1-5, to=2-5]
  \arrow[from=1-3, to=1-4]
  \arrow[from=2-3, to=2-4]
  \arrow["\sigFun\specCoalg", from=2-2, to=2-3]
  \arrow["\sigFun\implCoalg", from=1-2, to=1-3]
  \arrow["{\sigFun^{(2)} \Phi}", from=1-3, to=2-3]
\end{tikzcd}\]
Here, $\sigFun^{(n)}$ is the $n$-fold composition of $\sigFun$.
Observe that when using the coalgebras of \ref{ex:allocation} and the potential function of \ref{ex:allocation-potential}, the commutative mappings presented in Section~\ref{sec:introduction} are induced by this diagram.
Henceforth, we break away from $\Int$ in favor of a cost model like $\Nat$ that does \emph{not} admit additive inverses, making precise the assumption that potential must be nonnegative.

\subsection{Classic Amortized Analyses}

We will now describe more complex analyses using this framework.
In general, for a data structure implementation $(\implObj, \implCoalg)$ and an amortized cost specification $(\specObj, \specCoalg)$, an amortized analysis will be a coalgebra morphism
$\Phi : (\implObj, \implCoalg) \to (\specObj, \specCoalg)$
serving as a behavior-relevant generalization of potential functions.

\begin{example}\label{ex:dynamic-array}
  Let $\sigFun \X = E \pto \X$, the $E$-fold power, representing the signature with one method for reading one value of type $E$ at a time.
  We can use this signature to implement a dynamically-resizing array, a classic first example of an amortized data structure, where the method represents pushing an element of type $E$ to the end of the array.
  In this data structure, we represent a list of arbitrary length using an underlying array with a fixed length.
  When more data is added to the list than the current array can hold, a new array of twice the length is allocated, and the old data is copied over.
  In the cost model here, we charge one unit of cost for each write, array allocation, and array deallocation.
  \begin{enumerate}
    \item
      The specification again uses carrier $\specObj = \F{1}$.
      Its transition morphism $\specCoalg : 1 \pto \F{1}$ is given by:
      \begin{center}
        \iblock{
          \mrow{\specCoalg : \F{1}}
          \mrow{\specCoalg = \charge{3}{\ret{\triv}}}
        }
      \end{center}
      In this coalgebra, no state is maintained, and we charge $\cost{3}$ at each operation in order to save up for an eventual copy.

    \item
      The data structure implementation uses carrier $\implObj = \F{\underline{\implObj}}$, where
      \[ \underline{\implObj} = \sum_{n : \Nat} \mathsf{array}_E[2^n - 1, 2^{n+1}-1) \]
      stores a logarithm-size bound $n$ and an array with length between $2^n - 1$ and $2^{n + 1} - 1$.
      We give the transition morphism $\implCoalg$ by cases:
      \begin{center}
        \iblock{
          \mrow{\implCoalg : \underline{\implObj} \pto E \pto \F{\underline{\implObj}}}
          \mrow{\implCoalg~(n, a)~e =
            \begin{cases}
              \charge*{3 + \length{a}}{\ret{\suc{n}, \consright{a}{e}}} & \text{if } \length{a} + 1 = 2^{n + 1} - 1 \\
              \charge{1}{\ret{n, \consright{a}{e}}} & \text{otherwise}
            \end{cases}}
        }
      \end{center}
      We charge $\cost*{3 + \length{l}}$ in the expensive case when we have to copy the data to a new array, accounting for the allocation, write, copy, and deallocation, and we charge $\cost{1}$ for each write in the cheap case, since we have enough space and only need to perform a single write operation.
  \end{enumerate}
  To give a morphism from $(\implObj, \implCoalg)$ to $(\specObj, \specCoalg)$, we must provide a function $\Phi : \underline{\implObj} \to \Nat$ such that the necessary square commutes.
  We define $\Phi(n, a) = \cost*{2(\length{a} + 1) - 2^{n + 1}}$, which meets the necessary criterion, showing that $(\specObj, \specCoalg)$ is a reasonable specification for $(\implObj, \implCoalg)$.
\end{example}

Once again, the carrier of the specification coalgebra is trivial, since each push operation has a constant amortized cost.
Sometimes, though, an operation can have a cost dependent on some aspects of the state, modeled by a nontrivial carrier of the specification coalgebra.
Then, the generalized potential function reveals these aspects about the data structure state in addition to mediating the potential cost.

\begin{example}
  Extending Example~\ref{ex:dynamic-array}, we can add a method parameterized by an endofunction $E \Rightarrow E$ to update all values in an array.
  Let $\sigFun \X = (E \Rightarrow \X) \times ((E \Rightarrow E) \Rightarrow \X)$.
  Then, for a carrier $\implObj$, a transition morphism consists of a pair of maps $\implCoalg_1 : \implObj \to E \Rightarrow \implObj$ and $\implCoalg_2 : \implObj \to ((E \Rightarrow E) \Rightarrow \implObj)$.
  We let $\implCoalg_1$ be the same map as before, and we let $\implCoalg_2(n,a)(f) = \charge*{\length{a}}{\ret{n, \mathsf{map}~f~a}}$.
  There is no way to amortize this large cost, since the update method may be called arbitrarily often.
  Thus, to match the cost of this update method in the specification coalgebra $(\specObj, \specCoalg)$, the carrier must reveal some data about the underlying array.
  Since only the length of the array matters here, we may choose $\specObj = \F{\Nat}$, only tracking the length.
  Then, we define $\specCoalg$ as follows, giving the projections separately via copattern matching~\cite{abel-pientka-thibodeau-setzer>2013}:
  \begin{center}
    \iblock{
      \mrow{\specCoalg : \Nat \pto \sigFun(\F{\Nat})}
      \mrow{\specCoalg~\proj{push}~n~e = \charge{3}{\ret{\suc{n}}}}
      \mrow{\specCoalg~\proj{update}~n~f = \charge{n}{\ret{n}}}
    }
  \end{center}
  The push method still costs $\cost{3}$, but now it must remember the increase of the length of the array.
  The update method now can cost $\cost{n}$ when the state is some $n$, preserving the length $n$.
  The coalgebra morphism $\Phi : \underline{\implObj} \pto \F{\Nat}$ is identical on cost, but the new nontrivial behavior component sends a bounded array $(n, a)$ to its length, $\length{a}$, written $\Phi(n, a) = \charge*{2(\length{a} + 1) - 2^{n + 1}}{\ret{\length{a}}}$.
\end{example}

In this example, the state-dependent cost only appears for the non-amortized operation.
In general, though, this need not be the case; cost is allowed to depend on behavior.
For example, the operations on a splay tree have an amortized cost logarithmic in the size of the tree~\cite{sleator-tarjan>1985}.
We now consider queues, a classic example of an amortized data structures that amortize cost using two operations.

\begin{example}\label{ex:queue}
  A queue is an abstract data type in which elements are enqueued to one end of the queue and dequeued from the other end.%
  \footnote{This was the principal example of the predecessor to this work~\cite{grodin-harper>2023}.}
  To express this method in the signature, we let \[ \sigFun \X = (E \pto \X) \times (\F{1} + E \rtimes \X), \] where $E \rtimes \X$ is the $E$-fold copower.
  The first method, enqueue, accepts an element of type $E$ to store and continues.
  The second method, dequeue, either terminates if the queue is empty or provides an element of type $E$ before continuing.
  This signature is similar to that of Section~\ref{sec:intro-coalgebra}, adapted to the category of cost algebras by replacing exponentials and products with powers and copowers where necessary.

  \begin{enumerate}
    \item
      The specification uses carrier $\specObj = \F*{\listty{E}}$, storing a specification-level list of elements.
      Its transition morphism $\specCoalg$ implements queues naively via a list of elements:
      \begin{center}
        \iblock{
          \mrow{\specCoalg : \listty{E} \pto \sigFun(\F*{\listty{E}})}
          \mrow{\specCoalg~\proj{enqueue}~l~e = \charge{1}{\ret{\consright{l}{e}}}}
          \mrow{\specCoalg~\proj{dequeue}~\nilex = \inj{1}(\ret{\triv})}
          \mrow{\specCoalg~\proj{dequeue}~(\consex{e}{l}) = \inj{2}(e, \ret{l})}
        }
      \end{center}

    \item
      The data structure implementation uses carrier $\implObj = \F*{\listty{E}^2}$, storing a pair of lists to form a ``batched queue'' \cite{hood-melville>1981,burton>1982,gries>1989,okasaki>1999}.
      The first list is treated as an ``inbox'', storing enqueued elements, and the second list is treated as an ``outbox'', producing elements to dequeue.
      Occasionally, when the outbox is empty, the inbox is reversed and placed in the outbox.
      \begin{center}
        \iblock{
          \mrow{\implCoalg : \listty{E}^2 \pto \sigFun(\F*{\listty{E}^2})}
          \mrow{\implCoalg~\proj{enqueue}~(l_i, l_o)~e = \ret{\consex{e}{l_i}, l_o}}
          \mrow{\implCoalg~\proj{dequeue}~(l_i, \nilex) =
            \begin{cases}
              \inj{1}(\ret{\triv}) & \text{if }\mathsf{reverse}(l_i) = \nilex \\
              \charge*{\length{l_i}}{\inj{2}(e, \ret{l_o})} & \text{if }\mathsf{reverse}(l_i) = \consex{e}{l_o}
            \end{cases}
          }
          \mrow{\implCoalg~\proj{dequeue}~(l_i, \consex{e}{l_o}) = \inj{2}(e, \ret{l_i, l_o})}
        }
      \end{center}
      Here, our cost model charges $\cost*{\length{l}}$ cost for a call to $\mathsf{reverse}(l)$.
  \end{enumerate}
  The coalgebra morphism representing the amortized analysis is a first-class effectful program, integrating cost and behavior verification:
  \[ \Phi(l_i, l_o) = \charge*{\length{l_i}}{\ret{l_o \mdoubleplus \mathsf{reverse}(l_i)}} \]
  In addition to the traditional cost-level potential function $\length{l_i}$, it includes a behavioral simulation $l_o \mdoubleplus \mathsf{reverse}(l_i)$ converting the pair of lists to a single specification-level list.
\end{example}

\subsection{Generalizations of Amortized Analysis}

Viewing amortized analysis coalgebraically allows the underlying category to be swapped out, leading to compact and elegant presentations of novel variations of amortization.

\begin{example}
  Traditionally, it is assumed that addition in the cost model is commutative and admits an inverse.
  In this form, though, no requirements are placed on the monoid whatsoever.
  For example, we may let our ``costs'' be strings, where the monoid operation is concatenation.
  Then, amortization represents buffering, a performance technique in which many strings are occasionally printed in aggregate to avoid repeating fixed costs associated with the writing of any data.
  Let $\sigFun \X = \String \pto \X$, providing a single method for printing.
  \begin{enumerate}
    \item
      The specification coalgebra uses state $\specObj = \F{1}$.
      Its transition morphism $\specCoalg : 1 \pto \String \pto \F{1}$ simply prints the provided string:
      \begin{center}
        \iblock{
          \mrow{\specCoalg : \String \pto \F{1}}
          \mrow{\specCoalg~s = \charge{s}{\ret{\triv}}}
        }
      \end{center}
      Here, ``charging a string cost'' should be understood as printing the string.
    \item
      The implementation uses state $\implObj = \F{\underline{\implObj}}$, where \[ \underline{\implObj} = \sum_{s : \String} \length{s} < n \]
      for a fixed buffer size $n$.
      Its transition morphism prints strings in chunks of length $n$, saving any remaining characters in the state.
      Let $\mathsf{chop}_n$ split a string into a portion with length a multiple of $n$ and a remainder with length less than $n$.
      Then, we define:
      \begin{center}
        \iblock{
          \mrow{\implCoalg : \underline{\implObj} \pto \String \pto \F{\underline{\implObj}}}
          \mrow{\implCoalg~s_0~s = \mathsf{let}~(s', s'_0) = \mathsf{chop}_n(s_0 \mdoubleplus s)~\mathsf{in}~\charge{s'}{\ret{s'_0}}}
        }
      \end{center}
  \end{enumerate}
  The generalized potential function $\Phi : \underline{\implObj} \to \String$ is simply the inclusion, ``flushing'' any data remaining in the buffer.
  For example, when $n = 8$ and printing $\mathsf{``world"}$ on a buffer containing $\mathsf{``hello"}$,  the generalized amortization condition is the following:
  \[ \Phi(\mathsf{``hello"}) \mdoubleplus \mathsf{``world"} = \mathsf{``hellowor"} \mdoubleplus \Phi(\mathsf{``ld"}) \]
  In other words, we ask that flushing the original buffer with $\mathsf{``hello"}$ and then printing $\mathsf{``world"}$ is equivalent to printing $\mathsf{``hellowor"}$ via the buffered implementation and then flushing the remaining buffer $\mathsf{``ld"}$.

  This shows that buffering can be understood as an amortized implementation of printing strings in real time.
  Using the state monad in place of the writer monad, this technique can be adapted to support buffering of arbitrary state.
\end{example}

From this perspective, we may move beyond the writer monad, amortizing other effects.
For any strong monad $T$, it is also the case that $T(\Writer{(-)})$ forms a monad, where $\Cost$ is an arbitrary monoid for cost.
We now consider working with coalgebras in the category of $T(\Writer{(-)})$-algebras for various choices of $T$.

\begin{example}
  When $T = \mathcal{D}$ is the finitely-supported distribution monad, we can study randomized amortized analysis.
  For example, letting $\sigFun = \Id$, we can implement an alternating amortization technique that flips many coins occasionally, whereas the specification suggests that one coin is flipped per transition.
  \begin{enumerate}
    \item
      The specification coalgebra uses state $\specObj = \F{1}$.
      Its transition morphism $\specCoalg : 1 \pto \F{1}$ is a simple Bernoulli distribution, flipping a coin and deciding whether to incur cost accordingly.
    \item
      The implementation coalgebra uses state $\implObj = \F*{\Fin{k}}$ for a fixed $k$ indicating how often to sample.
      Its transition morphism $\implCoalg : \Fin{k} \pto \F*{\Fin{k}}$ is like that of Example~\ref{ex:allocation}, sampling a $k$-binomial distribution when its counter is $0$ and decrementing the counter otherwise.
  \end{enumerate}
  The potential function $\Phi : \Fin{k} \pto \F{1}$ computes a distribution for each state $d : \Fin{k}$.
  Similar to Example~\ref{ex:allocation-potential}, we let the generalized potential function be the $(k - d - 1)$-binomial distribution, balancing the number of samples done by the specification and the implementation.
  Notice that here, though, $\Phi$ is not merely computing a number as a potential: it computes an entire distribution via an effectful program.
\end{example}

\begin{example}
  Sometimes, one may consider expected amortized analysis of a randomized data structure, reasoning about the expected cost of a sequence of operations.
  While this notion is subtle to define explicitly,
  a reasonable definition of expected amortized analysis falls out of the coalgebraic perspective.
  If the cost model $\Cost$ is equipped with the structure of a convex space (\ie, a $\mathcal{D}$-algebra structure),
  then we have a distributive law that computes the expected value:
  \[ \mathcal{D}(\Writer{(-)}) \to \Writer{\mathcal{D}(-)} \]
  Therefore, $\Writer{\mathcal{D}(-)}$ forms a monad for reasoning about expected cost, and coalgebras over an endofunctor on $\Alg{\Writer{\mathcal{D}(-)}}$ cleanly and precisely specify expected amortized analysis.
\end{example}

Although the remainder of the paper is compatible with other monads, we work only with the writer monad with numeric costs for simplicity of examples.
\section{Lax Amortized Analysis in a Bicategory}\label{sec:lax}

In some examples, such as those considered thus far, amortized analysis is precise, where the specification exactly matches the amount of cost used by the implementation.
However, it is common for the specified cost to be an upper bound, since some cases may be cheaper than specified due to internal factors that would be difficult to communicate via a concise specification.
In the literature, the equation
\[ \costof{\specCoalg} = \costof{\implCoalg}(d) + \Phi(\behof{\implCoalg}(d)) - \Phi(d) \]
defines the amortized cost $\costof{\specCoalg}$, which is then given an upper bound.
Since here $(\specObj, \specCoalg)$ is a specification implementation, though, we treat $\specCoalg$ as the upper bound itself, turning the above equation into the inequality
\[ \costof{\specCoalg} \ge \costof{\implCoalg}(d) + \Phi(\behof{\implCoalg}(d)) - \Phi(d). \]
To achieve this categorically, we augment our development slightly: rather than using coalgebras and morphisms for an endofunctor on a category, we consider \emph{colax} coalgebras and morphisms for an endo\emph{-2-}functor on a \emph{bicategory}.
The 2-cells of the bicategory will serve as inequalities for the purpose of cost analysis, drawing inspiration from Grodin~\etal~\cite{grodin-niu-sterling-harper>2024}.

For a bicategory $\catC$, let $\sigFun : \catC \to \catC$ be an endo-2-functor.
Although the definition of $\sigFun$-coalgebra does not change, the definition of morphism between $\sigFun$-coalgebras is affected: rather than considering coalgebra morphisms where the given square commutes exactly, we only ask for the square to commute up to a 2-cell.

\begin{definition}
  Let $(\implObj, \implCoalg)$ and $(\specObj, \specCoalg)$ be $\sigFun$-coalgebras.
  A \emph{colax morphism of $\sigFun$-coalgebras} \cite{blackwell-kelly-power>1989,lack>2005,lack-shulman>2012} from $(\implObj, \implCoalg)$ to $(\specObj, \specCoalg)$ consists of a morphism $\Phi : \implObj \to \specObj$ that colaxly preserves the $\sigFun$-coalgebra structure:
  \[\begin{tikzcd}
    \implObj & \sigFun \implObj \\
    \specObj & \sigFun \specObj
    \arrow["\implCoalg", from=1-1, to=1-2]
    \arrow["\specCoalg", from=2-1, to=2-2]
    \arrow["\Phi", from=1-1, to=2-1]
    \arrow["\sigFun\Phi", from=1-2, to=2-2]
		\arrow["\varphi"{description}, shorten <=4pt, shorten >=4pt, Rightarrow, from=1-2, to=2-1]
  \end{tikzcd}\]
	Here, $\varphi$ is a 2-cell $\Phi \semi \specCoalg \Leftarrow \implCoalg \semi{} \sigFun\Phi$, serving as a proof of inequality.
	We refer to $\varphi$ as the \emph{lax generalized amortization condition}.
\end{definition}

For amortized analysis, we will typically find ourselves using a 2-poset, a bicategory whose 2-cells are mere propositions.
In this case, the 2-cell of a morphism is simply a proof that $\Phi \semi \specCoalg \ge \implCoalg \semi{} \sigFun\Phi$.
Concretely, we will choose $\catC = \Poset$.
Following Grodin~\etal~\cite{grodin-niu-sterling-harper>2024}, we will use discrete posets everywhere except for the cost component on the writer monad, which will be the natural numbers equipped with the usual increasing ordering, written $\omega$.

\begin{example}\label{ex:colax}
	Let $(\F{\underline{\implObj}}, \implCoalg)$ and $(\F{1}, \specCoalg)$ be $\Id$-coalgebras.
	Then, a colax morphism from $(\F{\underline{\implObj}}, \implCoalg)$ to $(\F{1}, \specCoalg)$ consists of a map $\Phi : \underline{\implObj} \to \omega$ and a 2-cell \[ \Phi \semi \specCoalg \ge \implCoalg \semi{} \sigFun\Phi \]
	demonstrating that $\Phi$ satisfies the lax generalized amortization condition.
	In this case, the inequality condition exactly requires that
	\[ \Phi(d) + \costof{\specCoalg} \ge \costof{\implCoalg}(d) + \Phi(\behof{\implCoalg}(d)), \]
	which matches the traditional amortization condition
	\[ \costof{\specCoalg} \ge \costof{\implCoalg}(d) + \Phi(\behof{\implCoalg}(d)) - \Phi(d) \]
  when the cost model is commutative and has additive inverses.
\end{example}

\begin{remark}
	$\sigFun$-coalgebras and their colax morphisms form a bicategory, where a 2-cell from $(\Phi, \varphi)$ to $(\Phi', \varphi')$ is a 2-cell $\Phi \Rightarrow \Phi'$ in $\catC$ along with a coherence condition on $\varphi$ and $\varphi'$~\cite{lack>2005}.
	In the restricted case of 2-posets, the coherence condition is trivialized.
	Then, a 2-cell $\Phi \leq \Phi'$ justifies that $\Phi$ expresses the amortization argument with at most as much overhead as $\Phi'$.
	For instance, in Example~\ref{ex:colax}, if $\Phi$ is a colax morphism, then so is $\Phi'(d) = \Phi(d) + \cost{c}$ for any $c : \omega$, using the commutativity of addition in $\omega$.
	Then, $\Phi \leq \Phi'$, since $\Phi'$ unnecessarily adds $\cost{c}$ cost.
\end{remark}

The advantages and generalizations from the 1-categorical development translate immediately to the 2-categorical setting.
For example, the specification carrier can be altered to expose some state, and the monad can be varied to amortize other effects.

\begin{example}\label{ex:stack}
  Extending Example~\ref{ex:dynamic-array}, we can treat a dynamically-resizing array as a stack by adding a method to pop the most-recently-added element.
  Consider the same signature $\sigFun$ as in Example~\ref{ex:queue}; we provide coalgebra implementations that represent stacks instead of queues:
  \begin{enumerate}
    \item
      In order to implement a $\sigFun$-coalgebra, we must not only keep track of the length of the stack, but all the elements stored in the stack, since the pop method produces an element of type $E$.
      The specification uses state type $\specObj = \F*{\listty{E}}$, storing a specification-level list of elements.
      Its transition morphism $\specCoalg$ is defined as follows:
      \begin{center}
        \iblock{
          \mrow{\specCoalg : \listty{E} \pto \sigFun(\listty{E})}
          \mrow{\specCoalg~\proj{push}~l~e = \charge{3}{\ret{\consex{e}{l}}}}
          \mrow{\specCoalg~\proj{pop}~\nilex = \inj{1}(\ret{\triv})}
          \mrow{\specCoalg~\proj{pop}~(\consex{e}{l}) = \charge{2}{\inj{2}(e, \ret{l})}}
        }
      \end{center}
      The push method behaves as in Example~\ref{ex:dynamic-array}, and the new pop method charges $\cost{2}$ for a pop on a nonempty array, terminating if the array is empty.

    \item
      The data structure implementation uses a similar state type,
      \[ \underline{\implObj} = \sum_{n : \Nat} \mathsf{array}_E[2^n - 1, 2^{n+2}-1). \]
      The only difference is a looser bound on the range of the array length.
      The push method stays the same, and the new pop method is similar, where $\mathsf{init}$ and $\mathsf{last}$ get the initial segment and last element of an array, respectively.
      \begin{center}
        \iblock{
          \mrow{\implCoalg : \underline{\implObj} \to E \to \F{\implObj}}
          \mrow{\implCoalg~\proj{push}~(n, a)~e \mid (\length{a} + 1 \equiv 2^{n + 2} - 1) = \charge*{3 + \length{a}}{\ret{\suc{n}, \consright{a}{e}}}}
          \mrow{\implCoalg~\proj{push}~(n, a)~e \mid \text{otherwise} = \charge{1}{\ret{n, \consright{a}{e}}}}
          \mrow{\implCoalg~\proj{pop}~(0, \nilex) = \inj{1}(\ret{\triv})}
          \mrow{\implCoalg~\proj{pop}~(\suc{n}, a) \mid (\length{a} \equiv 2^n - 1) = \charge*{2 + (\length{a} - 1)}{\inj{2}{(\mathsf{last}(a), \ret{n, \mathsf{init}(a)})}}}
          \mrow{\implCoalg~\proj{pop}~(\suc{n}, a) \mid \text{otherwise} = \charge{1}{\inj{2}{(\mathsf{last}(a), \ret{\suc{n}, \mathsf{init}(a)})}}}
        }

      \end{center}
  \end{enumerate}
  Let $c(x, y) = \max(2 \cdot (x - y), y - x)$.
  The program $\Phi(n, a) = \charge*{c(\length{a}, 2^{n+1}-1)}{\ret{\mathsf{toList}(a)}}$ is a coalgebra morphism representing the amortized analysis, containing the traditional cost-level potential function alongside a behavioral simulation that converts the size-bounded array to a specification-level list.
  Intuitively, an array is in a lower potential state the closer the number of elements it stores is to the middle of the bounds, since a resize is necessarily many operations away; this is mathematically justified by the potential function.
  When moving away from this midpoint, the potential function meets the amortization condition.
  However, when moving towards this midpoint, the potential is decreasing; in this case, we do not need all the cost provided by the specification, and the implementation only meets the specification laxly.
  Thus, the morphism $\Phi$ is not a strict coalgebra morphism, but it is a colax coalgebra morphism, guaranteeing that the amortized implementation is at least as efficient as the specification suggests.
\end{example}

\begin{example}\label{ex:non-free}
  By choosing a non-free carrier, we can ``lazily'' avoid some computation if its results are never needed while amortizing the cost if the results are eventually needed.
  A unary counter can be represented by the signature
  \[ \sigFun \X = \X \times (\Nat \rtimes \X) \times \F{1}, \]
  where the operations increment the counter, compute the total, and terminate the counter, respectively.
  As our cost model, we charge $\cost{1}$ per successor.

  \begin{enumerate}
    \item
      The specification uses carrier $\specObj = \F{\Nat}$, storing a single natural number.
      \begin{center}
        \iblock{
          \mrow{\specCoalg : \F{\Nat} \to \sigFun(\F{\Nat})}
          \mrow{\specCoalg~\proj{increment}~s = \bindex{s}{n}\charge{1}{\ret{\suc{n}}}}
          \mrow{\specCoalg~\proj{total}~s = \bindex{s}{n}(n, \ret{n})}
          \mrow{\specCoalg~\proj{terminate}~s = \bindex{s}{n}{\ret{\triv}}}
        }
      \end{center}
      Note that we explicitly write $\F{\Nat} \to \sigFun(\F{\Nat})$ instead of our usual shorthand $\Nat \pto \sigFun(\F{\Nat})$ to emphasize similarity with the following implementation.

    \item
      The data structure implementation has carrier $\implObj$ as the following comma object, a subobject of the lazy pair $\F{\Nat} \times \F{1}$ where the cost of the $\F{1}$ component is at most the cost of the $\F{\Nat}$ component:
      \[\begin{tikzcd}
        D & {\F{\Nat}} \\
        {\F{1}} & {\F{1}}
        \arrow[from=1-1, to=1-2]
        \arrow[from=1-1, to=2-1]
        \arrow["\le"{description}, from=1-2, to=2-1, draw=none]
        \arrow["{\F{\triv}}", from=1-2, to=2-2]
        \arrow[r,-,double equal sign distance,double, from=2-1, to=2-2]
      \end{tikzcd}\]
      The implementation stores this lazy pair of $\F{\Nat}$ and $\F{1}$ computations to ``hedge its bets'':
      in case the final total is never used, the computation of type $\F{1}$ will waste the cost from the unused successors.
      \begin{center}
        \iblock{
          \mrow{\implCoalg : D \to \sigFun(D)}
          \mrow{\implCoalg~\proj{increment}~d = ((\bindex{d~\proj{proj}_1}{n}\charge{1}{\ret{\suc{n}}}), (d~\proj{proj}_2))}
          \mrow{\implCoalg~\proj{total}~d = \bindex{d~\proj{proj}_1}{n}(n, (\ret{n}, \ret{\triv}))}
          \mrow{\implCoalg~\proj{terminate}~d = d~\proj{proj}_2}
        }
      \end{center}
      The increment operation keeps the $\F{\Nat}$ and $\F{1}$ projections independent, accumulating one cost in the $\F{\Nat}$ component while leaving the $\F{1}$ component as-is.
      However, to implement the totaling operation, the first projection must be used to compute the eager $\Nat$ before the lazy ``hedging'' can take place.
      By the algebra structure on a product, all cost stored in the $\F{\Nat}$ component will flow to \emph{both} components going forward: since we needed the total after all, even the amortized cost of the second component of type $\F{1}$ must be increased.
      In the termination operation, though, the opposite situation occurs: we don't need the first component after all, so by taking the second projection, we may save on cost incurred by increments since the last total was computed.
  \end{enumerate}
  The first projection $\Phi : D \to \F{\Nat}$ is a colax morphism from $(\implObj, \implCoalg)$ to $(\specObj, \specCoalg)$, where the lax amortization condition in the termination case is justified by the cost inequality guaranteed by the comma object.
\end{example}

In the remainder of the paper, we will work with 1-categories for simplicity, although the constructions readily generalize to the 2-categorical setting.
\section{Splitting and Combining Potential}\label{sec:profunctor}

In more complex amortized analyses, the behavior of an amortized data structure can branch, breaking the data structure into multiple parts or combining multiple instances.
As discussed by Okasaki~\cite[\S 5.3]{okasaki>1999}, in this scenario, the amortization condition should consider the sum of the potentials of the input and output states.
Informally, we say
\[ \costof{\specCoalg} = \costof{\implCoalg}(\textit{Input}) + \sum_{d' \in \behof{\implCoalg}(\textit{Input})} \Phi(d') - \sum_{d \in \textit{Input}} \Phi(d), \]
where $\textit{Input}$ is a set of input states and $\behof{\implCoalg}(\textit{Input})$ is the corresponding set of output states.
In the case that there is a single input and a single output, this condition is equivalent to the usual condition.

To make sense this in our presentation, we consider two additional structures.
First, we represent multiple data structure states in parallel via a monoidal product on $\catC$, recovering and formalizing the idea of summing the potentials of states for the typical case of free algebra carriers.
Then, we generalize signatures from endofunctors to endo\emph{profunctors} to allow for multiple parallel inputs.

\subsection{Parallel States with Additive Potential}

To represent multiple simultaneous output states, we require that $\catC$ come equipped with a symmetric monoidal structure $(\top, \otimes)$.
Then, for example, a method that splits the amortized data structure into two parts will be represented by the signature by $\sigFun \X = \X \otimes \X$.
Such a symmetric monoidal structure $(\top, \otimes)$ exists in the category of algebras $\catC = \Alg{T}$ when the monad $T$ is strong and commutative.
The writer monad $\Writer{(-)}$ is commutative exactly when the monoid on $\Cost$ is commutative; this is often a reasonable assumption in the setting of cost analysis.
When $T$ is commutative, the adjunction $\F{} \dashv \U{} : \Alg{T} \to \Set$ is also lax monoidal, which implies that $\F{}$ is a strong monoidal functor:
\begin{align*}
  \top &\cong \F{1} \\
  \F{\A} \otimes \F{\B} &\cong \F*{\A \times \B}
\end{align*}
The forward direction of the first isomorphism adds together the potential stored in the components.
Thus, we can support the branching generalization of amortized analysis via the monoidal product.

\begin{example}
  Let $\sigFun \X = \X \otimes \X$, representing a single method that splits a data structure into components.
  Suppose $(\F{\underline{\implObj}}, \implCoalg)$ and $(\F{1}, \specCoalg)$ are $\sigFun$-coalgebras.
  To prove that the former is an amortized implementation of a latter, we give a potential function $\Phi : \underline{\implObj} \to \Cost$ such that the following square commutes:
  \[\begin{tikzcd}
    {\F{\underline{\implObj}}} & {\F{\underline{\implObj}} \otimes \F{\underline{\implObj}}} \\
    {\F{1}} & {\F{1} \otimes \F{1}}
    \arrow["\Phi", from=1-1, to=2-1]
    \arrow["{\Phi \otimes \Phi}", from=1-2, to=2-2]
    \arrow["\implCoalg", from=1-1, to=1-2]
    \arrow["\specCoalg", from=2-1, to=2-2]
  \end{tikzcd}\]
  Since $\F{1} \otimes \F{1} \cong \F{1}$, the behavioral component of both paths are trivial.
  The condition on costs can be stated as follows, using the fact that the map $\F{1} \otimes \F{1} \to \F{1}$ adds costs:
  \[ \Phi(d) + \costof{\specCoalg} = \costof{\implCoalg}(d) + (\Phi(\behof{\implCoalg}(d)_1) + \Phi(\behof{\implCoalg}(d)_2)) \]
  Equivalently, we may write:
  \[ \Phi(d) + \costof{\specCoalg} = \costof{\implCoalg}(d) + \sum_{i \in \set{1, 2}} \Phi(\behof{\implCoalg}(d)_i) \]
  Returning to the informal amortization condition, this makes precise the notion of having multiple states output states, using the monoidal product to capture the addition of potentials.
\end{example}

\subsection{Algebraic and Coalgebraic Operations via Profunctors}

Some data structures support ``algebraic'' operations that, for example, may combine multiple instances.
Such situations can be described by the use of \emph{$\sigFun$-algebras} rather than coalgebras and their (colax) morphisms.
However, in general, an operation may have many inputs and outputs.
For example, an operation may take two instances of a data structure and produce two updated instances, $(\X \otimes \X) \to (\X \otimes \X)$.
As defined, a coalgebra for an endofunctor $\sigFun$ consists of a carrier $\implObj$ and a transition morphism $\implCoalg : \implObj \to \sigFun \implObj$ that takes a single state and provides possibilities for transition.
To support such ``binary methods'' (as in object-oriented programming \cite{tews>2001}) with multiple inputs while still retaining coalgebraic operations with multiple outputs, we generalize to coalgebras for an endoprofunctor $\sigFun$, which will provide the possibility for multiple input and output states by supporting both covariant and contravariant uses of the carrier.

When $(\catC, \top, \otimes)$ is additionally equipped with a closure ${\multimap} : \op{\catC} \times \catC \to \catC$,
one may naively attempt to use to support a method with two inputs (and two outputs) using signature
$\sigFun \X = \X \multimap (\X \otimes \X)$,
so that a $\sigFun$-coalgebra $\implObj \to \sigFun \implObj$ is equivalent to a map $\implObj \otimes \implObj \to \implObj \otimes \implObj$.
However, this definition of $\sigFun$ is not functorial: $\X$ is used in a contravariant position.
To address this, we generalize from endofunctors to endoprofunctors, analogous to generalizing from functions to relations or from coinductive types to existential types.
A profunctor $\catC \pro \catD$ is a functor $\op{\catD} \times \catC \to \Set$.
To represent an arbitrary signature for an abstract data type, we will use an endoprofunctor $\sigFun : \catC \pro \catC$ in place of an endofunctor, taking in both a covariant and a contravariant copies of what will ultimately be $\X$.
For example, if we let \[ \sigFun(\X^-, \X^+) = (\X^- \otimes \X^-) \multimap (\X^+ \otimes \X^+), \]
we describe a signature that takes in a pair of states and produces a new pair of states.
We now recall the definition of a coalgebra for an endoprofunctor, using the bicategorical generalization coalgebra.

\begin{definition}\label{def:pro-coalgebra}
  Let $\sigFun$ be an endoprofunctor.
  A $\sigFun$-coalgebra is a pair $(\implObj, \implCoalg)$ of a carrier object $\implObj : 1 \pro \catC$ and a transition morphism $\implCoalg : \implObj \to \sigFun \circ \implObj$~\cite{nlab:algebra_for_a_profunctor}.
\end{definition}
Note that profunctors $1 \pro \catC$ are equivalent to presheaves $\Presh{\catC}$.
For cost analysis, we will continue to use $\catC \isdef \Alg{\Writer{(-)}}$.
Coalgebras for endoprofunctors encompass coalgebras for endofunctors.

\begin{theorem}
  Let $\sigFun : \catC \to \catC$ be an arbitrary endofunctor, and define $\widetilde{\sigFun} : \catC \pro \catC$ by:
  \[ \widetilde{\sigFun}(\X^-, \X^+) = \X^- \multimap \sigFun \X^+ \]
  Then, a $\sigFun$-coalgebra with carrier $\X$ is equivalent to a $\widetilde{\sigFun}$-coalgebra with carrier $\Yoneda{\X}$.
\end{theorem}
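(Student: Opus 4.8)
The plan is to reduce both sides to set-level transition data and connect them via the (co)Yoneda lemma. On the endofunctor side, a $\sigFun$-coalgebra with carrier $X$ is by definition just a morphism $\implCoalg : X \to \sigFun X$ in $\catC$, i.e.\ an element of the hom-set $\catC(X, \sigFun X)$. On the endoprofunctor side, by Definition~\ref{def:pro-coalgebra} a $\widetilde{\sigFun}$-coalgebra with carrier $\Yoneda{X}$ is a morphism of presheaves $\implCoalg : \Yoneda{X} \to \widetilde{\sigFun} \circ \Yoneda{X}$, that is, a natural transformation from $\catC(-, X)$ to the presheaf $\widetilde{\sigFun} \circ \Yoneda{X}$. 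I would show these two sets of data are in bijection.

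First I would apply the ordinary Yoneda lemma. Since the source $\Yoneda{X} = \catC(-, X)$ is representable, natural transformations out of it into any presheaf $P$ are classified by elements of $P(X)$; hence a $\widetilde{\sigFun}$-coalgebra structure on $\Yoneda{X}$ is the same as a single element of $(\widetilde{\sigFun} \circ \Yoneda{X})(X)$.

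Next I would compute the profunctor composite pointwise as a coend, $(\widetilde{\sigFun} \circ \Yoneda{X})(Z) = \int^{Y} \catC(Y, X) \times \widetilde{\sigFun}(Z, Y)$, evaluate at $Z = X$, and collapse the coend using the density (co-Yoneda) formula $\int^{Y} \catC(Y, X) \times H(Y) \cong H(X)$, applied to the covariant functor $H(Y) = \widetilde{\sigFun}(X, Y)$. This yields $(\widetilde{\sigFun} \circ \Yoneda{X})(X) \cong \widetilde{\sigFun}(X, X)$. By definition $\widetilde{\sigFun}(X,X)$ is the set underlying $X \multimap \sigFun X$; since $\U{} \cong \catC(\top, -)$ with $\top \cong \F{1}$, and the closed structure gives $\catC(\top, X \multimap \sigFun X) \cong \catC(X, \sigFun X)$, this set is exactly $\catC(X, \sigFun X)$. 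Chaining the identifications produces the desired bijection between $\widetilde{\sigFun}$-coalgebra structures on $\Yoneda{X}$ and $\sigFun$-coalgebra structures on $X$.

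The only real subtlety is bookkeeping the variances so that the two Yoneda-type arguments fire in the right places: ordinary Yoneda to classify natural transformations out of $\Yoneda{X}$, and density to evaluate the composition coend, which crucially needs $\widetilde{\sigFun}(X,-)$ to be covariant in the integrated variable $Y$. I would therefore check at the outset that $\widetilde{\sigFun}(X^-, X^+) = X^- \multimap \sigFun X^+$ is genuinely a profunctor $\catC \pro \catC$ — contravariant in $X^-$ from the contravariance of $\multimap$ in its first argument, and covariant in $X^+$ by functoriality of $\sigFun$ together with covariance of $\multimap$ in its second — so that the composite and the coend are well-defined. Beyond this, every step is a routine invocation of a standard lemma, and since the carriers $X$ and $\Yoneda{X}$ are fixed on the two sides, the bijection of transition data is precisely the asserted equivalence of coalgebras.
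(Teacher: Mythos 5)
Your proof is correct and takes essentially the same route as the paper's: both classify transition morphisms $\Yoneda{X} \to \widetilde{\sigFun} \circ \Yoneda{X}$ via the Yoneda lemma as elements of $\widetilde{\sigFun}(X, X) = X \multimap \sigFun X$, which is by definition a $\sigFun$-coalgebra structure on $X$. Your writeup merely expands what the paper compresses into ``by the Yoneda lemma'' --- the coend formula for profunctor composition, its co-Yoneda collapse, and the identification of the underlying set of $X \multimap \sigFun X$ with $\catC(X, \sigFun X)$ --- all of which is accurate bookkeeping.
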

\begin{proof}
  By the Yoneda lemma, maps $\implCoalg : \Yoneda{\implObj} \to \widetilde{\sigFun} \circ \Yoneda{\implObj}$ are equivalent to elements of $\widetilde{\sigFun}(\implObj, \implObj)$.
  By definition, we have $\widetilde{\sigFun}(\implObj, \implObj) = \implObj \multimap \sigFun \implObj$, precisely the definition of a $\sigFun$-coalgebra transition morphism.
\end{proof}

While coalgebras over endofunctors are definable as coalgebras over endoprofunctors, this new environment allows more flexibility in the contravariant position.
Morphisms of coalgebras over an endoprofunctor generalize morphisms of coalgebras over an endofunctor, as well.

\begin{definition}
  Let $(\implObj, \implCoalg)$ and $(\specObj, \specCoalg)$ be $\sigFun$-coalgebras, where $\sigFun$ is an endoprofunctor.
  A morphism of $\sigFun$-coalgebras from $(\implObj, \implCoalg)$ to $(\specObj, \specCoalg)$ consists of a morphism $\Phi : \implObj \to \specObj$ that preserves the $\sigFun$-coalgebra structure, as before:
  \[\begin{tikzcd}
    \implObj & \sigFun \circ \implObj \\
    \specObj & \sigFun \circ \specObj
    \arrow["\implCoalg", from=1-1, to=1-2]
    \arrow["\specCoalg", from=2-1, to=2-2]
    \arrow["\Phi", from=1-1, to=2-1]
    \arrow["\sigFun \circ \Phi", from=1-2, to=2-2]
  \end{tikzcd}\]
  Note that $\implObj$ and $\specObj$ are presheaves in $\Presh{\catC}$ and $\Phi$ is a morphism of presheaves.
\end{definition}

When $\implObj = \Yoneda{\implObj_0}$ and $\specObj = \Yoneda{\specObj_0}$, the situation becomes simpler.
Since the Yoneda embedding is fully faithful, it is equivalent to give a map $\Phi : \implObj_0 \to \specObj_0$.
By the Yoneda lemma, the structure preservation requirement can be simplified to the following:
\[\begin{tikzcd}
  1 & \sigFun(\implObj_0, \implObj_0) \\
  \sigFun(\specObj_0, \specObj_0) & \sigFun(\implObj_0, \specObj_0)
  \arrow["\implCoalg", from=1-1, to=1-2]
  \arrow["{\sigFun(\Phi, \specObj_0)}", from=2-1, to=2-2]
  \arrow["\specCoalg", from=1-1, to=2-1]
  \arrow["{\sigFun(\implObj_0, \Phi)}", from=1-2, to=2-2]
\end{tikzcd}\]
Viewing the covariant and contravariant positions as outputs and inputs, respectively, this formalizes of the amortization condition with multiple inputs and outputs.
We demonstrate this via a concrete example.

\begin{example}
  Define endoprofunctor $\sigFun(\X^-, \X^+) = (\X^- \otimes \X^-) \multimap (\X^+ \otimes \X^+)$, representing a method that takes two states of a data structure and provides two new states.
  Suppose $(\Yoneda*{\F{\underline{\implObj}}}, \implCoalg : \sigFun(\F{\underline{\implObj}}, \F{\underline{\implObj}}))$ and $(\Yoneda*{\F{1}}, \specCoalg : \sigFun(\F{1}, \F{1}))$ are $\sigFun$-coalgebras.
  Then, the above condition simplifies to:
  \[ (\Phi(d_1) + \Phi(d_2)) + \costof{\specCoalg} = \costof{\implCoalg}(d) + (\Phi(\behof{\implCoalg}(d)_1) + \Phi(\behof{\implCoalg}(d)_2)) \]
  where $d = (d_1, d_2) : \underline{\implObj} \times \underline{\implObj}$.
  Equivalently, we may write:
  \[ \sum_{i \in \set{1, 2}} \Phi(d_i) + \costof{\specCoalg} = \costof{\implCoalg}(d) + \sum_{i \in \set{1, 2}} \Phi(\behof{\implCoalg}(d)_i) \]
  Up to our usual movement of the starting potential across the equation, this is precisely the generalized amortization equation given above, using the structure $\sigFun$ to make precise the informal $\textit{Input}$ notion.
\end{example}

Using a monoidal product, we allowed multiple states to exist simultaneously, adding any potential they contain.
Then, using profunctors, we generalized the notion of signature to support arbitrary contravariant data, allowing arbitrarily many inputs and outputs to a method.

\section{Composition of Amortized Data Structures}\label{sec:composition}

Thus far, we have considered coalgebra morphisms in isolation, each showing that one data structure implementation matches a specification up to amortization.
Using the fact that $\sigFun$-coalgebras and coalgebra morphisms form a category, we may compose potential functions to support different levels of amortized abstraction.

\begin{example}
  Recall Examples~\ref{ex:allocation}~and~\ref{ex:allocation-potential}, where the specification $(\specObj, \specCoalg)$ purports to incur $\cost{1}$ of cost every operation while really the implementation $(\implObj, \implCoalg)$ incurs $\cost{8}$ of cost every eight operations.
  Alternatively, we may view $(\implObj, \implCoalg)$ as the specification, which can be implemented by an even more amortized scheme.
  For example, we may define a coalgebra $(\implObj', \implCoalg')$ that incurs $\cost{16}$ of cost every sixteen operations, which is an amortized implementation of $(\implObj, \implCoalg)$ via a new potential function $\Phi' : (\implObj', \implCoalg') \to (\implObj, \implCoalg)$.
  Of course, we may then compose these potential functions,
  \[\begin{tikzcd}
    {(\implObj', \implCoalg')} & {(\implObj, \implCoalg)} & {(\specObj, \specCoalg)},
    \arrow["{\Phi'}", from=1-1, to=1-2]
    \arrow["\Phi", from=1-2, to=1-3]
  \end{tikzcd}\]
  showing that $(\implObj', \implCoalg')$ is an implementation of the original specification $(\specObj, \specCoalg)$ after all.
\end{example}

More commonly, we may wish to compose coalgebras with different signatures, using one amortized data structure to implement another.
For example, we may wish to use a pair of stacks, with an amortized implementation in terms of arrays in Example~\ref{ex:stack}, to implement a queue, as given in Example~\ref{ex:queue}.
The coalgebras and morphisms for pairs of stacks and queues exist in different categories, $\Coalg{\sigFun_\text{s} \times \sigFun_\text{s}}$ and $\Coalg{\sigFun_\text{q}}$, where both $\sigFun_\text{s}$ and $\sigFun_\text{q}$ are (coincidentally!) both the signature from Example~\ref{ex:stack} extended with a method $\Nat \rtimes \X$ for computing the number of elements stored in the data structure.

Rather than viewing each category of coalgebras $\Coalg{\sigFun}$ in isolation for a given signature $\sigFun$, we may think of $\Coalg{-} : \Fun{\Alg{T}}{\Alg{T}} \to \Cat$ as an indexed category whose (covariant) Grothendieck construction
$\int^{\sigFun} \Coalg{\sigFun}$
is the category with objects $(\sigFun_\implObj, (\implObj, \implCoalg))$, where $(\implObj, \implCoalg)$ is a $\sigFun_\implObj$-coalgebra.
In other words, an object consists of a signature $\sigFun_\implObj$ along with an implementation of that signature.
A morphism from $(\sigFun_\implObj, (\implObj, \implCoalg))$ to $(\sigFun_\specObj, (\specObj, \specCoalg))$ consists of a natural transformation $\varphi : \sigFun_\implObj \to \sigFun_\specObj$ implementing the operations of $\sigFun_\specObj$ in terms of the operations of $\sigFun_\implObj$, along with a $\sigFun_\specObj$-coalgebra morphism from $(\implObj, \implObj \xrightarrow{\implCoalg} \sigFun_\implObj\implObj \xrightarrow{\varphi \implObj} \sigFun_\specObj\implObj)$ to $(\specObj, \specCoalg)$ performing an amortized analysis on the $\varphi$-translated implementation.

\begin{example}
  Let $\sigFun_\text{s}$ be the signature functor from Example~\ref{ex:stack} for stacks, and let $\sigFun_\text{c} \X = \X \times (\F{1} + \X)$ be the signature of for a counter with successor and predecessor methods.
  Let $\implObj$ be the carrier used to implement stacks.
  The amortized analysis of stacks induces a morphism
  \[ (\sigFun_\text{s}, (\implObj, \implCoalg)) \to (\sigFun_\text{s}, (\F*{\listty{E}}, \specCoalg)) \]
  whose translation component $\sigFun_\text{s} \to \sigFun_\text{s}$ is the identity.
  Separately, let $(\F{\Nat}, \specCoalg_\text{counter})$ be a $\sigFun_\text{c}$-coalgebra specification similar to the $\sigFun_\text{s}$ coalgebra $(\F*{\listty{E}}, \specCoalg)$, keeping the same costs but only storing the number of elements stored rather than the elements themselves.
  We can give a morphism to $(\F{\Nat}, \specCoalg_\text{counter})$
  \[ (\sigFun_\text{s}, (\F*{\listty{E}}, \specCoalg)) \to (\sigFun_\text{c}, (\F{\Nat}, \specCoalg_\text{counter})) \]
  that implements counters in terms of stacks of a sentinel value $e_0 : E$.
  Composing these morphisms, we get a map
  \[ (\sigFun_\text{s}, (\implObj, \implCoalg)) \to (\sigFun_\text{c}, (\F{\Nat}, \specCoalg_\text{counter})) \]
  that implements natural number counters in terms of a dynamically resizing array, mixing signatures.
\end{example}

To implement an amortized queue as a pair of amortized stacks, we must first be able to pair two coalgebras.
Fortunately, when restricted to signature functors equipped with a tensorial strength (which includes all signatures considered in this work), the indexed category $\Coalg{-}$ is lax monoidal.
\begin{theorem}
  Let $\catC$ be a symmetric monoidal category.
  Then, $\Coalg{-} : \Strong{\catC}{\catC} \to \Cat$ is a lax monoidal pseudofunctor, lifting the symmetric monoidal structure $(\top, \otimes)$ to coalgebras.
\end{theorem}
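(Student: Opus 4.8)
The plan is to take the monoidal structure on the domain $\Strong{\catC}{\catC}$ to be functor \emph{composition}, with unit the identity functor $\Id$, and to exhibit $\Coalg{-}$ as lax monoidal into the cartesian monoidal $(\Cat, \times, 1)$. First I would confirm that $(\Strong{\catC}{\catC}, \circ, \Id)$ is a (strict) monoidal category: the composite of two strong functors is again strong via Kock's double strength $\sigFun_1(\mathsf{t}_{\sigFun_2}) \circ \mathsf{t}_{\sigFun_1}$, and $\Id$ is trivially strong, so composition restricts from $[\catC, \catC]$ to the strong functors. The pseudofunctor itself is the easy part and is essentially strict: a morphism of $\Strong{\catC}{\catC}$ is a strength-preserving natural transformation $\varphi : \sigFun \to \sigFun'$, and $\Coalg{\varphi}$ reindexes a coalgebra by postcomposing its transition with $\varphi$, sending coalgebra morphisms to coalgebra morphisms on the nose; this assignment strictly preserves identities and composites, so $\Coalg{-}$ is a strict $2$-functor, a fortiori a pseudofunctor.

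Next I would construct the lax structure. The unit $\eta : 1 \to \Coalg{\Id}$ selects the $\Id$-coalgebra with carrier the tensor unit $\top$ and transition the identity $\top \to \Id\,\top = \top$. The laxator is a functor $\mu_{\sigFun_1, \sigFun_2} : \Coalg{\sigFun_1} \times \Coalg{\sigFun_2} \to \Coalg{\sigFun_1 \circ \sigFun_2}$ carrying a pair of coalgebras $(\implObj_1, \implCoalg_1)$ and $(\implObj_2, \implCoalg_2)$ to a coalgebra on $\implObj_1 \otimes \implObj_2$ whose transition is the double-strength composite
\[ \implObj_1 \otimes \implObj_2 \xrightarrow{\implCoalg_1 \otimes \implCoalg_2} \sigFun_1\implObj_1 \otimes \sigFun_2\implObj_2 \longrightarrow \sigFun_1(\implObj_1 \otimes \sigFun_2\implObj_2) \xrightarrow{\sigFun_1\mathsf{t}} \sigFun_1\sigFun_2(\implObj_1 \otimes \implObj_2), \]
where the middle arrow is the \emph{right} strength of $\sigFun_1$ and the last is $\sigFun_1$ applied to the left strength of $\sigFun_2$. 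It is exactly here that the hypothesis that $\catC$ is \emph{symmetric} monoidal is used: the right strength of $\sigFun_1$ is derived from its given (left) tensorial strength by conjugating with the symmetry. On morphisms, $\mu$ sends a pair of coalgebra morphisms $\Phi_1, \Phi_2$ to $\Phi_1 \otimes \Phi_2$, and I would check this is again a coalgebra morphism using naturality of the two strengths and bifunctoriality of $\otimes$.

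Finally I would discharge the coherence obligations of a lax monoidal pseudofunctor: pseudonaturality of $\mu$ in $(\sigFun_1, \sigFun_2)$ with respect to reindexing, and the associativity and unit constraints relating $\mu$ and $\eta$. Because $\Coalg{-}$ is strict and composition is strictly associative and unital, these constraints can be taken to be induced by the associator and unitors of $\otimes$ in $\catC$, and each reduces after unfolding to Kock's strength-coherence axioms together with the pentagon and triangle identities of $\catC$. The main obstacle I expect is the well-definedness and functoriality of the laxator: verifying that the double-strength composite genuinely satisfies the $(\sigFun_1 \circ \sigFun_2)$-coalgebra morphism squares for $\Phi_1 \otimes \Phi_2$, and that the two \emph{a priori} different bracketings arising from a triple laxator agree, both of which hinge on the interaction of the left and right strengths mediated by the symmetry. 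Once these strength calculations are in hand, the remaining monoidal bookkeeping is routine.
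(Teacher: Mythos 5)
Your write-up is internally coherent, but it equips $\Strong{\catC}{\catC}$ with the wrong monoidal structure for this theorem: you take the tensor of signatures to be functor \emph{composition}, so your laxator lands in $\Coalg{\sigFun_1 \circ \sigFun_2}$, whereas the tensor the paper needs is the \emph{pointwise product} of signatures, with laxator $\Coalg{\sigFun_1} \times \Coalg{\sigFun_2} \to \Coalg{\sigFun_1 \times \sigFun_2}$. The statement alone underdetermines this, but the application immediately following the theorem pins it down: the paper forms $(\specObj_\text{s}, \specCoalg_\text{s}) \otimes (\specObj_\text{s}, \specCoalg_\text{s}) : \Coalg{\sigFun_\text{s} \times \sigFun_\text{s}}$ and then gives a morphism out of $(\sigFun_\text{s} \times \sigFun_\text{s}, \mathord{-})$ translating each queue operation into an operation on \emph{either} stack. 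A $(\sigFun_1 \circ \sigFun_2)$-coalgebra cannot play this role: its single transition performs a $\sigFun_1$-operation and a $\sigFun_2$-operation in lockstep, so the paired structure never supports ``operate on the first component while leaving the second untouched,'' and the translation $\sigFun_\text{s} \times \sigFun_\text{s} \to \sigFun_\text{q}$ of Section~\ref{sec:composition} would not typecheck against your laxator. Your unit is correspondingly off: it should be the unique coalgebra $\top \to 1$ for the constant-at-terminal signature (the unit of the pointwise product), not the identity coalgebra on $\top$ for $\Id$.

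The repair reuses your one genuinely load-bearing observation. The correct laxator sends $(\implObj_1, \implCoalg_1)$ and $(\implObj_2, \implCoalg_2)$ to the coalgebra on $\implObj_1 \otimes \implObj_2$ whose transition is the pairing
\[
\bigl\langle (\implCoalg_1 \otimes \implObj_2) \semi t'_{\sigFun_1},\; (\implObj_1 \otimes \implCoalg_2) \semi t_{\sigFun_2} \bigr\rangle \;:\; \implObj_1 \otimes \implObj_2 \to \sigFun_1(\implObj_1 \otimes \implObj_2) \times \sigFun_2(\implObj_1 \otimes \implObj_2),
\]
where $t_{\sigFun_2}$ is the given strength and $t'_{\sigFun_1}$ is the costrength obtained from the strength of $\sigFun_1$ by conjugating with the symmetry --- exactly where you correctly located the use of the symmetric hypothesis and of the restriction to $\Strong{\catC}{\catC}$; in the intended setting $\catC = \Alg{\Writer{(-)}}$, the map $\F{1} \otimes \F{1} \cong \F{1}$ underlying these strengths is what adds the potentials of the two components. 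On morphisms the laxator is $\Phi_1 \otimes \Phi_2$ as you propose, and the coherence 2-cells come from the associator, unitors, and symmetry of $\otimes$ together with the strength axioms, so the bookkeeping portion of your argument transfers essentially unchanged (note the pointwise product also requires $\catC$ to have finite products, which the categories of algebras in question do). Your composition-tensor variant is likely a true theorem --- the double-strength laxator into $\Coalg{\sigFun_1 \circ \sigFun_2}$ is a standard construction --- but it is a different statement, and it cannot support the pairing of amortized data structures for which this theorem is invoked.
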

\begin{proof}
  For coalgebras $(\implObj_1, \implCoalg_1) : \Coalg{\sigFun_1}$ and $(\implObj_2, \implCoalg_2) : \Coalg{\sigFun_2}$, we define $(\implObj_1, \implCoalg_1) \otimes (\implObj_2, \implCoalg_2)$ to have carrier $\implObj \isdef \implObj_1 \otimes \implObj_2$ and transition morphism
\[\begin{tikzcd}
	{\implObj} & {\implObj \times \implObj} &&& {(\sigFun_1 \implObj_1 \otimes \implObj_2) \times (\implObj_1 \otimes \sigFun_2 \implObj_2)} & {\sigFun_1 \implObj \times \sigFun_2 \implObj}
	\arrow["\Delta", from=1-1, to=1-2]
	\arrow["{(\implCoalg_1 \otimes \implObj_2) \times (\implObj_1 \otimes \implCoalg_2)}", from=1-2, to=1-5]
	\arrow["{t_1 \times t_2}", from=1-5, to=1-6]
\end{tikzcd}\]
  where $t_1$ and $t_2$ are strengths for $\sigFun_1$ and $\sigFun_2$, respectively.
  The unit is $(\top, \top \xrightarrow{\triv} 1) : \Coalg{1}$.
\end{proof}

If $(\specObj_\text{s}, \specCoalg_\text{s}) : \Coalg{\sigFun_\text{s}}$ is the specification coalgebra for a stack,
then $(\specObj_\text{s}, \specCoalg_\text{s}) \otimes (\specObj_\text{s}, \specCoalg_\text{s}) : \Coalg{\sigFun_\text{s} \times \sigFun_\text{s}}$
is the compound specification for a pair of stacks.
We may then hope to give a morphism
\[ (\sigFun_\text{s} \times \sigFun_\text{s}, (\specObj_\text{s}, \specCoalg_\text{s}) \otimes (\specObj_\text{s}, \specCoalg_\text{s})) \to (\sigFun_\text{q}, (\specObj_\text{q}, \specCoalg_\text{q})) \]
implementing amortized queues in terms of a pair of stacks.
The translation morphism $\sigFun_\text{s} \times \sigFun_\text{s} \to \sigFun_\text{q}$ would have to describe each queue operation in terms of a stack operation.
However, implementing a queue operation may require more than one stack operation, popping all the elements of the ``inbox'' stack in the case the ``outbox'' stack is empty.
To allow each queue operation to perform arbitrarily many stack operations, we instead treat $\Coalg{-}$ as category indexed in the coKleisli category of the cofree comonad comonad, $\Cofree{(-)} : \Fun{\Alg{T}}{\Alg{T}} \to \Fun{\Alg{T}}{\Alg{T}}$.
This approach is the formal dual of recent work by Grodin~and~Spivak~\cite{grodin-spivak>2024-poly-morphic-effect-handlers} about algebraic effect handlers; in this sense, a translation morphism can be viewed as a ``coalgebraic coeffect cohandler''.

\begin{definition}
  The pseudofunctorial action of the indexed category $\Coalg{-} : \cat{coKl}(\Cofree{(-)}) \to \Cat$ is given by post-composition of coalgebra maps $\implCoalg : \implObj \to \Cofree{\sigFun_\implObj}\implObj$ with the coKleisli extension of a translation morphism $\varphi : \Cofree{\sigFun_\implObj} \to \sigFun_\specObj$, written $\varphi^\dagger : \Cofree{\sigFun_\implObj} \to \Cofree{\sigFun_\specObj}$.
  \begin{align*}
    &\Coalg[0]{\sigFun} = \CCoalg{\Cofree{\sigFun}} \\
    &\Coalg[1]{\varphi}_0(\implObj, \implObj \xrightarrow{\implCoalg} \Cofree{\sigFun_\implObj}\implObj) = (\implObj, \implObj \xrightarrow{\implCoalg} \Cofree{\sigFun_\implObj}\implObj \xrightarrow{\varphi^\dagger \implObj} \Cofree{\sigFun_\specObj}\implObj) \\
    &\Coalg[1]{\varphi}_1((\implObj, \implCoalg) \xrightarrow{\Phi} (\specObj, \specCoalg)) = (\implObj, (\implCoalg \semi{} \varphi^\dagger \implObj)) \xrightarrow{\Phi} (\specObj, (\specCoalg \semi{} \varphi^\dagger \specObj))
  \end{align*}
  We write $\CCoalg{\Cofree{\sigFun}}$ for the category of comonad coalgebras over the cofree comonad $\Cofree{\sigFun}$ to simplify composition.
  However, this category is equivalent to $\Coalg{\sigFun}$, so the object part is the same as when indexed over the category of endofunctors.
\end{definition}

In the fibered category
$\int^{\sigFun : \cat{coKl}(\Cofree{(-)})} \Coalg{\sigFun}$,
the objects are the same as before, but the signature translation of a morphism from $(\sigFun_\implObj, (\implObj, \implCoalg))$ to $(\sigFun_\specObj, (\specObj, \specCoalg))$ now uses the cofree comonad on $\sigFun_\implObj$ in its domain, $\varphi : \Cofree{\sigFun_\implObj} \to \sigFun_\specObj$, translating the operations of $\sigFun_\specObj$ to finitely many $\sigFun_\implObj$ operations.

\begin{example}
  In this category, we may implement a morphism
  \[ (\sigFun_\text{s} \times \sigFun_\text{s}, (\specObj_\text{s}, \specCoalg_\text{s}) \otimes (\specObj_\text{s}, \specCoalg_\text{s})) \to (\sigFun_\text{q}, (\specObj_\text{q}, \specCoalg_\text{q})), \]
  using amortized stacks to implement amortized queues.
  The translation morphism $\sigFun_\text{s} \times \sigFun_\text{s} \to \sigFun_\text{q}$ is analogous to the behavior of the implementation coalgebra in Example~\ref{ex:queue} but generic in stack implementation, and the coalgebra morphism is precisely as in Example~\ref{ex:queue}.
  Pre-composing this with the morphism from Example~\ref{ex:stack}, we implement the queue specification via a pair of stacks up to amortization, which in turn are implemented as arrays up to amortization.
\end{example}

The techniques considered here generalize to bicategories (Section~\ref{sec:lax}) and mixed-variance signatures given by profunctors (Section~\ref{sec:profunctor}).
\section{Conclusion}\label{sec:conclusion}

In this work, we observed that the condition imposed on a potential function in amortized analysis is generalized by the commutativity condition for a coalgebra in the category of writer monad algebras.
Using this perspective, we gave simple, clear accounts of examples, and we generalized amortized analysis to other effectful settings.
We expanded this definition to include branching amortization given a commutative cost model using profunctors, and we investigated the composition of amortized data structures via the indexed category of coalgebras.
Finally, we observed that this development makes sense in a bicategory, where colax coalgebra morphisms give a more relaxed and practically useful definition of amortization.

The key ideas and some examples presented have been mechanized inside Calf~\cite{niu-sterling-grodin-harper>2022,grodin-niu-sterling-harper>2024} in the Agda proof assistant.
Basic definitions, such as functor, coalgebra, and (colax) coalgebra morphism, are formulated explicitly, with the colax commutativity condition expressed using the program inequality of Grodin~\etal~\cite{grodin-niu-sterling-harper>2024}.
The analyses themselves are structured analogous to the on-paper reasoning presented here, with the added formality required for mechanized proof.

\subsection{Related Work}

This development principally expands upon the early work of Grodin~and~Harper~\cite{grodin-harper>2023}, which first proposed to use coalgebraic techniques to reason about amortized analysis.
Amortized analysis was first characterized by Tarjan~\cite{tarjan>1985}, providing a technique for describing the cost of data structure operations that takes into account the evolving state of the data structure and considers cost in aggregate, in contrast with algorithmic worst-case upper bounds on a per-operation basis.
Representing abstract data types as coalgebras and simulations as coalgebra morphisms has been well studied; see, \eg, Jacobs~\cite{jacobs>1996-objects-coalgebraically,jacobs>2017} for an overview.
Formalization of cost analysis using a writer monad has been studied extensively; see Niu~\etal~\cite{niu-sterling-grodin-harper>2022} for a comprehensive review of the literature.
Mechanizations of amortized cost analysis has also been pursued, but in the traditional algebraic form~\cite{plotkin-power>2008,nipkow-brinkop>2019,niu-sterling-grodin-harper>2022}.
Type theories for automatic cost inference, such as AARA~\cite{hofmann-jost>2003,hoffmann-jost>2022}, are based on the potential functions of the physicist's method of amortized analysis.

\subsection{Future Work}

While the cost algebra carriers of most coalgebras considered are free (\ie, of the form $\F{\A}$), in Example~\ref{ex:non-free} we use a product of free algebras to lazily avoid some computation.
Although this example is somewhat contrived and the laziness does not affect the asymptotic complexity of operations, we leave it as future work to determine if such methods could be used to save cost in a more realistic data structure.
Similar observations connecting amortization to laziness have been made by Okasaki~\cite[\S 6]{okasaki>1999}; however, his work uses memoized suspensions to amortize in a call-by-value setting, whereas here we use linearity at the level of computation types in call-by-push-value to address this issue.
Future work may make this connection between amortization, laziness, and linearity more precise.

When using endoprofunctors as signatures, though, the carriers of all coalgebras considered are representable (\ie, of the form $\Yoneda{\X}$); we hope to investigate in future work whether non-representable carriers would be of use for amortized analysis.

In the pioneering works on amortized analysis, Sleator and Tarjan~\cite{tarjan>1985,sleator-tarjan>1985-paging} describe the ``move-to-front'' paging algorithm and prove that it is competitive relative to \emph{any} other algorithm.
This argument, while also using amortization, does not immediately appear to fit into our coalgebraic framework.
The potential is computed by comparing the move-to-front algorithm to another algorithm; we conjecture that the potential is behaving like a generalized bisimulation relation rather than a morphism.
While we view amortized analysis as a directed simulation here (\ie, a morphism of coalgebras), we suspect that a double categorical approach may be valuable for understanding a symmetric generalization more akin to bisimulation.
Cospans of coalgebras have been used to generalize bisimulations \cite{staton>2011}, which can serve as vertical morphisms to add to the bicategory of colax coalgebras.
Additionally, Goncharov~\etal~\cite{goncharov-hofmann-nora-schroder-wild>2022} propose a double category of (co)lax coalgebras of a lax-double functor, generalizing from a fundamentally double categorical perspective.

In the present work, we consider coalgebras in a category of algebras.
This construction appears more symmetrically in the development of mathematical operational semantics, there called a bialgebra, representing a transition system in the style of operational semantics.
We leave it to future investigation to understand the connections between amortized analysis and operational semantics.

In our formalization within Calf, the notion of coalgebra morphism is defined explicitly, in contrast with the notion of cost algebra that is built in as a primitive.
We hope to incorporate coalgebras and coalgebra morphisms more fundamentally, allowing amortized analysis to be more seamless within the language rather than embedded.

\section*{Acknowledgements}
The authors wish to thank Max New, Yue Niu, and David Spivak for insightful conversations that directly impacted the direction of this work and Zachary Battleman, Runming Li, Parth Shastri, and Jonathan Sterling for fruitful adjacent collaboration that contributed broader inspiration.
Additionally, the authors thank the anonymous thank the anonymous reviewers for their thoughtful comments.
This material is based upon work supported by the United States Air Force Office of Scientific Research under grant number FA9550-21-0009 (Tristan Nguyen, program manager) and the National Science Foundation under grant number CCF-1901381. Any opinions, findings and conclusions or recommendations expressed in this material are those of the authors and do not necessarily reflect the views of the AFOSR or NSF.
\bibliographystyle{./entics}
\bibliography{mfps,nlab}

\end{document}